\DeclareMathOperator{\NP}{\mathsf{NP}}
\begin{document}

\title{{\sc On the Grundy number of Cameron graphs}}

\author{
Wing-Kai~Hon\inst{1}
\and 
Ton~Kloks\inst{}
\and 
Fu-Hong~Liu\inst{1}
\and 
Hsiang-Hsuan~Liu\inst{1,2}
\and
Tao-Ming~Wang\inst{3}
}

\institute{National Tsing Hua University, Hsinchu, Taiwan\\
{\tt (wkhon,fhliu,hhliu)@cs.nthu.edu.tw}
\and
University of Liverpool, Liverpool, United Kingdom\\
{\tt hhliu@liverpool.ac.uk}
\and
Tunghai University, Taichung, Taiwan\\
{\tt wang@go.thu.edu.tw}
}

\maketitle 

\begin{abstract}
The Grundy number of a graph is the maximal number of 
colors attained by a first-fit coloring of the graph. 
The class of Cameron graphs is the Seidel switching class of 
cographs. In this paper we show that the Grundy number is 
computable in polynomial time for Cameron graphs. 
\end{abstract}

\section{Introduction}

A proper coloring of a graph is a partition of its vertices 
into independent sets. We refer to the sets in the partition as 
color classes, or simply as colors. 
The chromatic number of a graph $G$, denoted as $\chi(G)$, 
is the minimal number of colors used in a proper coloring. 

\bigskip 

\begin{definition}
Let $\{C_1,\dots,C_k\}$ be the color classes of a proper 
coloring of $G$. The coloring is a first-fit coloring 
if each vertex in color class $C_j$ has at least one 
neighbor in every color class $C_i$ with $i < j$. 
\end{definition}
The maximal number of color classes in a first-fit coloring 
is called the Grundy number of $G$. We denote the Grundy number 
as $\Gamma(G)$. Notice that, if 
$C_1,\dots,C_k$ are the color classes of a first-fit coloring, then 
for each $i$, $C_i$ is a maximal independent set in the 
subgraph induced by 
\[\bigcup_{j=i}^k C_j.\] 
This property characterizes first-fit colorings. 

\bigskip 

A graph is a cograph if it has no induced $P_4$, that is a 
path with four vertices. Cographs are the graphs 
that are closed under unions and joins. It is easily seen 
that in every cograph, every maximal independent set meets 
every maximal clique. This property characterizes the class 
of cographs. By means of this characterization, 
Christen and Selkow prove the following theorem. 
We give a different proof.  

\begin{theorem}
\label{thm 1}
When $G$ is a cograph, 
\[\Gamma(G)=\omega(G)=\chi(G).\]
\end{theorem}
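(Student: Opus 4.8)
The plan is to sandwich $\Gamma(G)$ between $\omega(G)$ and $\chi(G)$ using the universally valid chain $\omega(G)\le\chi(G)\le\Gamma(G)$ together with the reverse inequality $\Gamma(G)\le\omega(G)$, which is where the cograph hypothesis enters. The first two inequalities hold for every graph: a clique forces its vertices into distinct colors, so $\omega(G)\le\chi(G)$; and ordering the vertices class by class according to an optimal proper coloring makes the first-fit rule assign exactly $\chi(G)$ colors, so $\chi(G)\le\Gamma(G)$. Thus the entire content is to show that no first-fit coloring of a cograph can use more than $\omega(G)$ colors.

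To that end I would fix a first-fit coloring with color classes $C_1,\dots,C_k$ and produce a clique of size $k$, which forces $k\le\omega(G)$ and hence $\Gamma(G)\le\omega(G)$. I would build the clique greedily from the last class downwards, maintaining the invariant that there is a clique $Q_i$ of the induced subgraph $G_i:=G\left[\bigcup_{j=i}^{k}C_j\right]$ containing exactly one vertex from each of $C_i,\dots,C_k$. Starting from any single vertex of $C_k$ as $Q_k$, I would extend $Q_i$ to $Q_{i-1}$ by adjoining a suitable vertex of $C_{i-1}$, so that after reaching $i=1$ the clique $Q_1$ meets every color class and therefore has size $k$.

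The crux is this extension step, and it is exactly where the structure is used. I would first record that every induced subgraph of a cograph is again a cograph, so each $G_i$ is a cograph, and that the characterization of first-fit colorings recalled above makes $C_i$ a \emph{maximal} independent set of $G_i$. Now, given the clique $Q_i\subseteq\bigcup_{j\ge i}C_j$, which is also a clique of the cograph $G_{i-1}$, I would extend it to a \emph{maximal} clique $Q'$ of $G_{i-1}$. By the characterization of cographs, the maximal independent set $C_{i-1}$ meets the maximal clique $Q'$; choose $u\in C_{i-1}\cap Q'$. Since $u$ lies in the clique $Q'$ together with all of $Q_i$, it is adjacent to every vertex of $Q_i$, and $u\notin Q_i$ because $u\in C_{i-1}$ while $Q_i$ avoids $C_{i-1}$. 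Setting $Q_{i-1}:=Q_i\cup\{u\}$ then preserves the invariant.

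I expect the main obstacle to be bookkeeping rather than any deep difficulty: one must check carefully that the property ``every maximal independent set meets every maximal clique'' is applied to the correct cograph $G_{i-1}$, and to a set that is genuinely a maximal independent set of \emph{that} subgraph rather than of $G$ itself. Once the invariant is carried down to $i=1$, the clique $Q_1$ has one vertex in each of the $k$ color classes, so $\omega(G)\ge k$; since the first-fit coloring was arbitrary this yields $\Gamma(G)\le\omega(G)$, and the chain of inequalities closes to give the three equalities.
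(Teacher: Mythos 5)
Your proof is correct, but it is a genuinely different argument from the one in the paper --- in fact it is essentially the original Christen--Selkow proof, which the paper explicitly cites and then deliberately avoids (``We give a different proof''). You use the characterization that in a cograph every maximal independent set meets every maximal clique, together with the observation (stated in the paper's introduction) that each class $C_i$ of a first-fit coloring is a maximal independent set in $G\left[\bigcup_{j\ge i}C_j\right]$, to peel off one clique vertex per color class from the top class downward; your bookkeeping (working in the induced cograph $G_{i-1}$, extending $Q_i$ to a maximal clique there, and noting $u\notin Q_i$ because $Q_i$ avoids $C_{i-1}$) is sound, and the base case and the chain $\omega\le\chi\le\Gamma$ are handled correctly. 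The paper instead argues by induction on the cotree: it shows $\Gamma$ satisfies $\Gamma(G_1\cup G_2)=\max\{\Gamma(G_1),\Gamma(G_2)\}$ and $\Gamma(G_1\ast G_2)=\Gamma(G_1)+\Gamma(G_2)$ for union and join, and observes that $\omega$ and $\chi$ obey the same recurrences. The trade-off: your argument is more local and self-contained (it exhibits an explicit clique of size $k$ witnessing $\Gamma\le\omega$, and only needs the meeting property of maximal cliques and independent sets), whereas the paper's recursive formulation is the one that sets up the decomposition-tree dynamic programming used later for Cameron graphs, which is presumably why the authors chose it. One minor remark: the inequality $\chi\le\Gamma$ needs no ordering argument --- any first-fit coloring is proper, so it already uses at least $\chi$ colors --- though your greedy-ordering justification is also valid.
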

\begin{proof}
Let $G$ be a cograph. If $G$ is the union of two smaller 
cographs, $G_1$ and $G_2$, then 
\begin{equation}
\label{eq1}
\Gamma(G)=\max \;\{\;\Gamma(G_1),\;\Gamma(G_2)\;\}.
\end{equation}

\medskip 

\noindent 
Assume that $G$ is the join of two smaller cographs, 
$G_1$ and $G_2$. Then any independent set has vertices 
only in one of the two graph $G_1$ or $G_2$. It follows 
that 
\begin{equation}
\label{eq2}
\Gamma(G)=\Gamma(G_1)+\Gamma(G_2).
\end{equation}

\medskip 

\noindent 
Notice that the clique number, and also the chromatic number,  
of $G$, satisfy recurrences similar 
to~\eqref{eq1} and~\eqref{eq2}.  
Since the above exhausts all alternatives, this proves the theorem. 
\qed\end{proof}

\bigskip 

\begin{definition}
Let $G$ be a graph and let $S \subseteq V(G)$ be a subset 
of vertices of $G$. The Seidel switch with respect to 
$S$
is the graph obtained from $G$ by complementing the adjacencies 
and nonadjacencies of pairs with one element in $S$ and the 
other in $V \setminus S$. 
\end{definition}
The interest in the Seidel switch grew out of the observation 
that the spectrum, that is, the multiset of eigenvalues  
of the $\{0,-1,1\}$-adjacency matrix remains 
the same under switching. 

\bigskip 

\begin{definition}
A graph is Cameron if it is obtained from a cograph by a 
Seidel switch. 
\end{definition}
The Cameron graphs are perfect. They are 
characterized by a finite set of forbidden 
induced subgraphs, namely the switching class of $C_5$, 
that is, the $C_5$, the bull, gem and co-gem. Another 
characterization states that a graph is Cameron if switching with 
respect to the neighborhood of a vertex produces a cograph 
(with the chosen vertex as an isolated vertex). 
That follows easily from the fact that  
the gem and cogem are forbidden, namely, this 
implies that for each vertex the neighborhood and 
nonneighborhood induce cographs.  
The observation above, together with the linear-time recognition 
of cographs obtained by, eg, Corneil, Perl and Stewart,  
yields an $O(n^2)$ recognition algorithm of Cameron graphs. 

\begin{corollary}
Cameron graphs are recognizable in $O(n^2)$ time. 
\end{corollary}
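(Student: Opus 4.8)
The plan is to turn the second characterization of Cameron graphs stated above into an algorithm, and to show that a single Seidel switch followed by one call to a cograph recognizer suffices. Concretely, I would fix an arbitrary vertex $v$, let $G_v$ denote the Seidel switch of $G$ with respect to $N(v)$, and test whether $G_v$ is a cograph; the algorithm answers that $G$ is Cameron exactly when this test succeeds. Observe that $v$ is isolated in $G_v$, since all edges from $v$ to $N(v)$ are complemented to nonedges and $v$ has no edges to $V\setminus N(v)$; hence $G_v$ is a cograph if and only if $G_v - v$ is.

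The correctness rests on the following equivalence, which I would establish for \emph{every} vertex $v$: the graph $G$ is Cameron if and only if $G_v$ is a cograph. One direction is immediate, because Seidel switching is an involution, so $G$ is obtained from $G_v$ by the switch with respect to $N(v)$; if $G_v$ is a cograph then $G$ lies in the switching class of a cograph and is Cameron by definition. For the converse I would use the forbidden-subgraph characterization together with the fact that switching commutes with taking induced subgraphs: for any vertex set $U$, the subgraph of $G_v$ induced on $U$ equals the switch of $G[U]$ with respect to $N(v)\cap U$. Suppose $G_v$ were not a cograph; then it has an induced $P_4$ on four vertices, and since $v$ is isolated in $G_v$ these four vertices avoid $v$. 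Adjoining $v$, the five vertices induce a $P_4$ together with an isolated vertex, that is, an induced cogem in $G_v$. Switching this quintuple back shows that the corresponding five vertices of $G$ induce a member of the switching class of the cogem, which is precisely the switching class of $C_5$, namely one of $C_5$, the bull, the gem, or the cogem. Each of these is a forbidden induced subgraph of Cameron graphs, contradicting the hypothesis that $G$ is Cameron.

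With the equivalence in hand the algorithm is routine. Pick any vertex $v$ and read off $N(v)$; forming $G_v$ requires complementing only the adjacencies between $N(v)$ and $V\setminus N(v)$, of which there are at most $|N(v)|\cdot|V\setminus N(v)| = O(n^2)$, so $G_v$ is built in $O(n^2)$ time. Running the linear-time cograph recognition algorithm of Corneil, Perl and Stewart on $G_v$ then costs $O(n+m)=O(n^2)$, and by the equivalence its verdict decides membership in the class of Cameron graphs. The total running time is therefore $O(n^2)$.

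The step I expect to require the most care is the converse direction of the equivalence, and in particular the fact that it holds for a single, arbitrarily chosen vertex rather than merely for some vertex: it is this uniformity that lets us avoid repeating the switch-and-test over all $n$ choices of $v$, which would cost $O(n^3)$. The crux is the bookkeeping that switching commutes with induced subgraphs, so that an induced $P_4$ in $G_v$ is transported, through the isolated vertex $v$, to a forbidden configuration in $G$. Once that identity is in place, the appeal to the forbidden-subgraph list closes the argument.
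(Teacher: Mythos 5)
Your proposal is correct and follows essentially the same route as the paper: fix a vertex $v$, Seidel-switch with respect to $N(v)$ in $O(n^2)$ time, and run the linear-time cograph recognizer of Corneil, Perl and Stewart on the result. Your justification of the underlying characterization is in fact more complete than the paper's sketch (which only notes that the neighborhood and non-neighborhood each induce cographs, leaving the ``mixed'' induced $P_4$'s untreated), since your switching-class argument via the cogem handles all cases uniformly.
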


\bigskip 

Like cographs, Cameron graphs 
form a self-complementary 
class of graphs.  Notice that $C_6$ and $P_5$, but not $P_6$, $C_5$ 
nor any cycle longer than $C_6$, 
are Cameron graphs.  The Grundy numbers of $C_6$, $P_4$ and $P_5$ are 3 so, 
Theorem~\ref{thm 1} is no longer true for Cameron graphs. 

\begin{figure}
\begin{center}
\setlength{\unitlength}{0.7mm}
\thicklines
\begin{picture}(205,20)
\put(10,20){\circle*{2.0}}
\put(15,30){\circle*{2.0}}
\put(20,10){\circle*{2.0}}
\put(25,30){\circle*{2.0}}
\put(30,20){\circle*{2.0}}
\put(10,20){\line(1,2){5}}
\put(10,20){\line(1,-1){10}}
\put(15,30){\line(1,0){10}}
\put(30,20){\line(-1,2){5}}
\put(30,20){\line(-1,-1){10}}
 
 \put(50,30){\circle*{2.0}}
 \put(55,20){\circle*{2.0}}
 \put(65,10){\circle*{2.0}}
 \put(75,20){\circle*{2.0}}
 \put(80,30){\circle*{2.0}}
 \put(55,20){\line(1,-1){10}}
 \put(55,20){\line(1,0){20}}
 \put(55,20){\line(-1,2){5}}
 \put(65,10){\line(1,1){10}}
 \put(75,20){\line(1,2){5}}
 
 \put(100,20){\circle*{2.0}}
 \put(105,30){\circle*{2.0}}
 \put(115,30){\circle*{2.0}}
 \put(110,10){\circle*{2.0}}
 \put(120,20){\circle*{2.0}}
 \put(100,20){\line(1,2){5}}
 \put(105,30){\line(1,0){10}}
 \put(120,20){\line(-1,2){5}}
 \put(110,10){\line(-1,1){10}}
 \put(110,10){\line(-1,4){5}}
 \put(110,10){\line(1,4){5}}
 \put(110,10){\line(1,1){10}}
 
 \put(140,20){\circle*{2.0}}
 \put(145,30){\circle*{2.0}}
 \put(155,30){\circle*{2.0}}
 \put(150,10){\circle*{2.0}}
 \put(160,20){\circle*{2.0}}
 \put(140,20){\line(1,2){5}}
 \put(145,30){\line(1,0){10}}
 \put(160,20){\line(-1,2){5}}
 
 \end{picture}
 \end{center}
 \caption{A $C_5$, bull, gem and cogem}
 \end{figure}
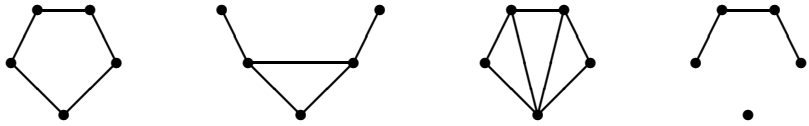
 
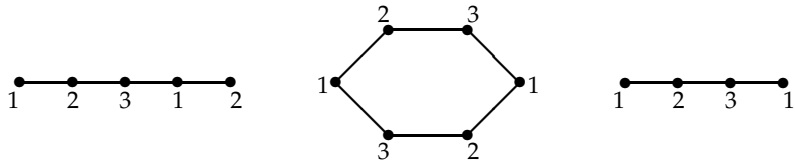
\begin{figure}
\begin{center}
\setlength{\unitlength}{0.7mm}
\thicklines
\begin{picture}(165,40)
\put(10,20){\circle*{2.0}}
\put(20,20){\circle*{2.0}}
\put(30,20){\circle*{2.0}}
\put(40,20){\circle*{2.0}}
\put(50,20){\circle*{2.0}}
\put(10,20){\line(1,0){40}}
\put(7.7,15.2){$1$}
\put(19,15.2){$2$}
\put(29,15.2){$3$}
\put(39,15.2){$1$}
\put(50,15.2){$2$}

\put(70,20){\circle*{2.0}}
\put(80,10){\circle*{2.0}}
\put(80,30){\circle*{2.0}}
\put(95,10){\circle*{2.0}}
\put(95,30){\circle*{2.0}}
\put(105,20){\circle*{2.0}}
\put(70,20){\line(1,-1){10}}
\put(70,20){\line(1,1){10}}
\put(80,10){\line(1,0){15}}
\put(80,30){\line(1,0){15}}
\put(105,20){\line(-1,1){10}}
\put(105,20){\line(-1,-1){10}}
\put(66.5,18){$1$}
\put(78,5.5){$3$}
\put(78,31.5){$2$}
\put(95,5.5){$2$}
\put(95,31.5){$3$}
\put(106.5,18){$1$}

\put(125,20){\circle*{2.0}}
\put(135,20){\circle*{2.0}}
\put(145,20){\circle*{2.0}}
\put(155,20){\circle*{2.0}}
\put(125,20){\line(1,0){30}}
\put(122.7,15.2){$1$}
\put(134,15.2){$2$}
\put(144,15.2){$3$}
\put(155,15.2){$1$}

\end{picture}
\end{center}
\caption{Grundy coloring of $P_4$, $P_5$ and $C_6$}
\end{figure}

\bigskip 

Cographs are characterized by the property that every induced 
subgraph with at least two vertices has a twin, that is, 
a module with two vertices. 
Cameron graphs satisfy a similar characterization.  
Define an anti-twin as a pair of vertices $x$ and $y$ 
such that every other vertex is adjacent to exactly one of the two. 

\begin{theorem}
A graph is Cameron if and only if every induced subgraph 
with at least two vertices has a twin or an anti-twin. 
\end{theorem}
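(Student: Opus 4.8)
The plan is to derive both directions from a single mechanical observation about how the twin relation interacts with switching, and then to settle the nontrivial implication by induction. Write $G^{S}$ for the Seidel switch of $G$ with respect to $S$; recall that switching is an involution, $(G^{S})^{S}=G$, and that it commutes with taking induced subgraphs, in the sense that $(G^{S})[W]=(G[W])^{S\cap W}$. The key lemma I would record first is the following parity fact. Fix a pair $\{x,y\}$ and a third vertex $z$, and consider the four cases according to whether $x,y$ lie on the same side of the partition $(S,V\setminus S)$ and whether $z$ does. Switching complements the edges $xz$ and $yz$ exactly when $z$ separates $x$ from $y$ across $S$; complementing both edges interchanges the condition ``$z$ is adjacent to both or to neither'' with the condition ``$z$ is adjacent to exactly one.'' Hence switching \emph{preserves} the twin/anti-twin status of $\{x,y\}$ when $x$ and $y$ lie on the same side of $S$, and \emph{interchanges} the two statuses when $x$ and $y$ lie on opposite sides. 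This one statement carries all of the work.

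For the forward implication, suppose $G$ is Cameron, say $G=H^{S}$ with $H$ a cograph. Given any induced subgraph $G[W]$ with $|W|\ge 2$, note that $G[W]=(H[W])^{S\cap W}$ and that $H[W]$ is again a cograph. By the twin characterization of cographs quoted above, $H[W]$ has a twin $\{x,y\}$, and the key lemma turns it into a twin of $G[W]$ if $x,y$ fall on the same side of $S\cap W$ and into an anti-twin otherwise. Either way $G[W]$ has a twin or an anti-twin, as required.

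For the converse I would induct on $|V(G)|$, the hypothesis being hereditary since an induced subgraph of an induced subgraph is an induced subgraph. The base case $|V(G)|\le 2$ is immediate, as every such graph is a cograph and hence Cameron. For the inductive step, apply the hypothesis to $G$ itself to obtain a twin or anti-twin $\{x,y\}$. By induction $G-y$ is Cameron, so $G-y=(H')^{S'}$ for some cograph $H'$ and some $S'\subseteq V(G)\setminus\{y\}$. Now extend $S'$ to a set $S\subseteq V(G)$ by placing $y$ on the same side as $x$ when $\{x,y\}$ is a twin of $G$, and on the opposite side when it is an anti-twin; set $H:=G^{S}$, so that $G=H^{S}$. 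In both cases the key lemma (read via the involution, from $G$ to $H$) makes $\{x,y\}$ a twin in $H$. Since deleting $y$ commutes with switching, $H-y=(G-y)^{S'}=H'$ is a cograph, and $y$ is a twin of $x$ in $H$. As cographs are closed under the addition of a twin, $H$ is a cograph, and therefore $G=H^{S}$ is Cameron.

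I expect the only genuine content to be the bookkeeping for the key lemma and the closure of cographs under adding a twin; the latter holds because a two-vertex module can meet any induced $P_{4}$ in at most one vertex, so a $P_{4}$ through $y$ would already appear in $H'$ after replacing $y$ by its twin $x$. Everything else is organized around the single principle that switching toggles the twin/anti-twin dichotomy exactly along the cut defined by $S$, so that the hereditary ``twin-or-anti-twin'' property is precisely the switching-invariant shadow of the hereditary ``twin'' property that defines cographs. The main obstacle, such as it is, is simply keeping the four sign cases of the key lemma straight and checking that the inductive reconstruction never forces an inconsistent choice for $y$, which it does not because $H$ is fully determined by $G$ and $S$ once the side of $y$ is fixed.
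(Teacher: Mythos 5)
The paper states this theorem without any proof (it is quoted as a known characterization, parallel to the twin characterization of cographs), so your argument is compared against nothing; on its own merits it is correct and complete. Your key parity lemma --- that switching with respect to $S$ preserves the twin/anti-twin status of a pair $\{x,y\}$ lying on one side of $S$ and interchanges the two statuses for a pair straddling $S$ --- is exactly the right invariant: for each third vertex $z$ the switch complements zero or two of the edges $xz,yz$ in the first case and exactly one in the second, so the parity of $|N(z)\cap\{x,y\}|$ is preserved or flipped uniformly over all $z$. Combined with the facts that switching commutes with taking induced subgraphs and is an involution, this gives the forward direction immediately from the twin characterization of cographs, and the converse by your induction: delete $y$, switch the resulting Cameron graph to a cograph $H'$, choose the side of $y$ so that $\{x,y\}$ becomes a twin of $H:=G^{S}$, and observe that $H$ is the cograph $H'$ with a twin added (your remark that a two-vertex module meets any induced $P_4$ in at most one vertex, because $P_4$ is prime, correctly justifies closure of cographs under twin addition). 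One sentence in your write-up is garbled and should be fixed before this is usable: you write that ``complementing \emph{both} edges interchanges the condition `adjacent to both or neither' with `adjacent to exactly one','' whereas complementing both edges \emph{preserves} each condition and complementing \emph{exactly one} of the two edges interchanges them; your final statement of the lemma, and everything downstream of it, uses the correct version, so this is a local slip rather than a gap.
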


\subsubsection{Some remarks on two-graphs}

A two-graph is a pair $(X,\Delta)$ 
where $X$ is a set and $\Delta$ is a collection of 3-subsets of $X$ 
with the property that every 4-subset of $X$ contains an even number 
of 3-subsets that are in $\Delta$. For example, when $G$ is a graph with 
vertex set $X$, then the collection of triples $\Delta$ that have 
an odd number of edges 
between them (called the `odd triples') 
defines a two-graph $(X,\Delta)$. 
When two graphs are Seidel switching-equivalent 
then they yield the same two-graph and, conversely,  
every two-graph corresponds uniquely to a Seidel switching 
class of graphs.  

\medskip 

\noindent
Consider the triples in $C_5$ with an odd number of edges. 
Cameron calls this two-graph the pentagon. He characterizes the 
two-graphs that do not contain the pentagon as an induced substructure 
as follows. Consider a tree $T$, without vertices of degree two,  
and let $X$ be the set of leaves of $T$. 
Since $T$ is bipartite and connected, it has a unique two-coloring. 
Call the colors in a two-coloring black and white. Let a set of leaves 
$\{x,y,z\}$ be a triple of $\Delta$ if the paths connecting the three 
meet in a black vertex. Then $(X,\Delta)$ is a two-graph and the two-graphs 
obtained in this manner are exactly the two-graphs without the pentagon 
as an induced substructure. 

\medskip 

\noindent 
In the same papers, Cameron characterizes, and counts, also the two-graphs 
that don't have a pentagon nor a hexagon as an induced substructure. 

\newpage 

\section{Cameron graphs}

The following characterization is readily checked. 

\begin{theorem}
A graph is Cameron if and only if there exists a 
coloring of the vertices with colors black and white such that 
the set of vertices of every nontrivial induced subgraph
has a partition into two sets
such that all crossing 
adjacencies are between vertices of the same color or 
between vertices of opposite colors. 
\end{theorem}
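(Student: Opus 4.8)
The plan is to identify the black/white coloring with the indicator of the Seidel switching set and to match the required vertex partition of each induced subgraph with the union/join decomposition of cographs. First I would recall the standard decomposition characterization of cographs: a graph is a cograph if and only if every nontrivial induced subgraph is disconnected or has a disconnected complement; equivalently, every induced subgraph on at least two vertices admits a partition into two nonempty parts with either no edge or every edge between them. The obstruction is exactly the $P_4$, which is connected with connected complement.

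Next I would set up the switching bookkeeping. Since $H$ is Cameron, write $H$ as the Seidel switch of a cograph $G$ with respect to a set $S$, and color the vertices of $S$ black and those of $V\setminus S$ white. The definition of the switch gives the single identity I will use repeatedly: a pair of vertices is adjacent in $H$ exactly when it is adjacent in $G$ and the two vertices share a color, or it is nonadjacent in $G$ and the two vertices have opposite colors; moreover switching with respect to $S$ a second time restores $G$. The heart of the argument is then a short parity computation: if a partition $\{A,B\}$ of a subgraph of $G$ is a join (all crossing pairs adjacent), then in $H$ a crossing pair is adjacent precisely when its endpoints share a color, whereas if the partition is a union (no crossing pair adjacent), then in $H$ a crossing pair is adjacent precisely when its endpoints have opposite colors.

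With this correspondence both directions follow. For the forward direction, fix the global coloring by $S$; for any nontrivial induced subgraph $F=H[W]$, the induced subgraph $G[W]$ is again a cograph on at least two vertices, hence decomposes as a join or a union, and transporting that partition of $W$ through the switch yields a partition whose crossing adjacencies are exactly the same-color (respectively opposite-color) pairs, as required. For the converse, given such a coloring I would switch $H$ back with respect to its black class to obtain a graph $G$; the color condition on each nontrivial induced subgraph transports to a union or join decomposition of the corresponding induced subgraph of $G$, so $G$ has no induced $P_4$ and is a cograph, making $H$ Cameron by definition.

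The step I expect to be the main obstacle is pinning down the precise reading of the phrase ``all crossing adjacencies are between vertices of the same color or between vertices of opposite colors'' as the exact biconditional pattern---a crossing pair is an edge if and only if its endpoints share (respectively differ in) color---and then carrying the four-way case analysis of the switch through without sign errors. A secondary point requiring care is that one fixed coloring, namely the indicator of $S$, must simultaneously serve every induced subgraph, so I would check that restricting this coloring and reading off the decomposition of the corresponding induced subgraph of $G$ is consistent for all choices of $W$.
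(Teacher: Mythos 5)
Your proposal is correct and follows essentially the same route as the paper, which dismisses the proof as ``readily checked'' and then gives exactly this correspondence: color the switching set, take the union/join decomposition of each nontrivial induced subgraph of the underlying cograph, and observe that the switch turns joins into ``crossing pairs adjacent iff same color'' and unions into ``crossing pairs adjacent iff opposite colors'' (and conversely, switching with respect to the black class turns such a partition back into a join or a union, so no induced $P_4$ survives). Your reading of the statement as the biconditional is the intended one, and your converse direction supplies the half the paper leaves implicit.
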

When a Cameron graph is obtained from a cograph 
via the switching with respect to a set $S$, then 
coloring all vertices of $S$ white and the remaining 
vertices black, satisfies the property mentioned in the 
theorem. For example, let $G$ is a Cameron graph and let $x$ 
be any vertex of $G$. Color 
the vertices of $N(x)$, 
that is, the neighborhood of $x$, white, and the remaining 
vertices black. If we switch the graph with respect to the 
white vertices, we obtain a cograph $H$ in which $x$ is an 
isolated vertex. Since $H$ is a cograph it has a binary, rooted  
decomposition tree, called a cotree. Each leaf of the tree 
corresponds uniquely to a vertex of the graph. Each internal node 
(including the root) of this 
tree is labeled as a join node or a union node. When the node is a 
join node, each vertex mapped to a leaf in the left subtree is made 
adjacent to each vertex that is mapped to a leaf in the right subtree. 
When an internal node of the decomposition tree is labeled as a union 
node, then no vertex of the left subtree is adjacent to any vertex of 
the right subtree. 

\medskip 

The decomposition tree for the Cameron graph $G$ is the same rooted binary tree. 
When an internal node is labeled as a join node for the cograph, then 
vertices of similar colors in the left and right tree are made adjacent. 
When an internal node is labeled as a union node, then exactly 
those pairs of vertices in the left and 
right subtree that have opposite colors are made adjacent. 
Henceforth, we refer to the join nodes in the decomposition tree as 
`parallel nodes,' and to the union nodes as `crossing nodes.' 

\section{Example}
\label{section example}

As an example, we present the following real-world problem. In some 
faraway country there are $N$ villages, numbered $1,\dots,N$. 
In village $i$ there are $b_i$ boys and $g_i$ girls eligible for marriage. 
However, albeit a bit archaic, 
the country's law and culture forbids the marriage of girls and boys 
that are from the same village. To study population growth, 
scientists are interested in this question: what is the 
{\em minimal\/} number of couples that get married, if we don't allow 
any single boy-and-girl pair from distinct villages. 

\bigskip 

The Cameron graph $G$ that represents the problem consists of 
$N$ cliques. Clique $i$ consists of $b_i$ black vertices and 
$g_i$ white vertices. Between any two cliques $i$ and $j$ 
we have a parallel 
connection, that is, all black vertices of clique $i$ are 
adjacent to all black vertices of clique $j$ and all white 
vertices of clique $i$ are adjacent to all white vertices of 
clique $j$. (Only heterosexual marriages are allowed. Thus, the 
Cameron graph is cobipartite.)  

\bigskip 

We are interested in the Grundy number of this Cameron graph. 
Notice that each independent set consists either of a single 
vertex or, of a black and white vertex from different cliques. 
The minimal number of pairs that get married is therefore, 
\begin{equation}
|V(G)|-\Gamma(G).
\end{equation}
It remains to show that we can compute $\Gamma(G)$ in polynomial time. 

\bigskip 

Construct the decomposition tree for $G$; this is a rooted binary tree 
with the black and white vertices in the leaves. Each internal 
vertex is labeled as a parallel node or a crossing node. 
In this example, we can have a decomposition tree 
with one crossing node for each village,  
that node connects all the boys and girls from that same 
village. We may assume that all the other internal nodes are 
parallel nodes. 

\bigskip 

Our method to solve this problem is a dynamic programming on this decomposition tree. 
For an internal node $t$ denote the set of vertices mapped to the 
leaves in the subtree by $V_t$. The algorithm computes a boolean function 
\[\tau(b_t,g_t,b_t^{\prime},g_t^{\prime}),\]
which is ${\tt true}$ if 
\begin{enumerate}[\rm (i)]
\item exactly $b_t$ single boys and $g_t$ single girls from $V_t$  
will, presumably, get married with boys and girls from $V(G)\setminus V_t$ 
(they marry in the future), and 
\item $b_t^{\prime}$ boys and $g_t^{\prime}$ girls from $V_t$ 
stay single altogether. 
\end{enumerate}

\bigskip 

The computation of the function $\tau$ for each village is easy; in the remainder 
we consider internal nodes of the decomposition tree such that each village 
is either fully contained in the leaves of the left subtree, or it is 
fully contained in the leaves of the right subtree, or it has an 
empty intersection with the leaves in the subtree.\footnote{That is, the decomposition 
tree represents a laminar family of subsets with the villages as atoms.} 

\bigskip 

We may assume the following \underline{principle of optimality}; for any node 
$t$ in the decomposition tree, 
either $b_t^{\prime}=0$ or $g_t^{\prime}=0$ or 
all these boys and girls are from a single village. For parameters 
such that this condition 
cannot be fulfilled, we let 
\[\tau=\text{\tt false}.\]   

\bigskip 

Consider an internal node $t$, and consider a parameter set 
$\{b_{\ell},g_{\ell},b_{\ell}^{\prime},g_{\ell}^{\prime}\}$ for the left subtree and a 
parameter set  $\{b_r,g_r,b_r^{\prime},g_r^{\prime}\}$ for the right subtree. We assume  
that the function $\tau$ evaluates as $\text{\tt true}$ for the parameters in the left 
subtree and for the parameters in the right subtree. 
The resulting set of parameters 
\[\{\;b,\;g,\;b^{\prime},\;g^{\prime}\;\}\] 
for the node $t$ is then
valid if there exist numbers $\alpha$ and $\beta$ such that 
$\alpha$ boys on the left get married to $\alpha$ girls on the right and 
$\beta$ girls on the left get married to $\beta$ boys on the right. Thus, 
these numbers must satisfy  
\[0 \leq \alpha \leq \min\;\{\;b_{\ell},\;g_r\;\} \quad\text{and}\quad  
0\leq \beta \leq \min\;\{\;g_{\ell},\;b_r\;\},\] 
and the 
resulting set of parameters then equals 
\begin{enumerate}
\item $b=b_r+b_{\ell}-\alpha-\beta$, 
\item $g=g_r+g_{\ell}-\alpha-\beta$,  
\item $b^{\prime}=b_{\ell}^{\prime}+b_r^{\prime}$ and 
\item $g^{\prime}=g_{\ell}^{\prime}+g_r^{\prime}$.  
\end{enumerate} 

\bigskip 

The optimality condition requires that 
$b^{\prime}=0$ or $g^{\prime}=0$ or that these singles are 
all from the same village. By the assumption that each village is 
either fully contained in a subtree, or disjoint from that subtree, 
this implies that we must have  
\begin{enumerate}[\rm (a)]
\item $b^{\prime}=b_{\ell}^{\prime}$ and $g^{\prime}=g_{\ell}^{\prime}$ and 
$b_r^{\prime}=g_r^{\prime}=0$, or
\item $b^{\prime}=b_r^{\prime}$ and $g^{\prime}=g_r^{\prime}$ and 
$g_{\ell}^{\prime}=b_{\ell}^{\prime}=0$, or 
\item $b^{\prime}=b_{\ell}^{\prime}+b_r^{\prime}$ and 
$g^{\prime}=g_{\ell}^{\prime}=g_r^{\prime}=0$, or 
\item $g^{\prime}=g_{\ell}^{\prime}+g_r^{\prime}$ and 
$b^{\prime}=b_{\ell}^{\prime}=b_r^{\prime}=0$.
\end{enumerate}

\bigskip 

At the root of the decomposition tree we require that 
\[b=g=0,\] 
because these singles won't have any opportunity to marry in the future, 
that is, with singles outside $V(G)$. 
The answer to the problem, that is, the minimal number of married couples is 
therefore 
\[\min \;\left\{\; \frac{|V(G)|-b^{\prime}-g^{\prime}}{2}\;
\mid \; \tau(0,0,b^{\prime},g^{\prime})=
\text{\tt true}\;\right\}.\] 

This shows that the Grundy number for Cameron graphs in 
this example is computable in polynomial time. In the following 
section we discuss the general case. 

\section{The Grundy number of Cameron graphs}

Let $G$ be a Cameron graph with a black-and-white coloring and let 
$H$ be the cograph that results from the Seidel switch of $G$ with respect 
to the set of white vertices. Notice that we may assume that $G$ is connected 
and that $\Bar{G}$ is connected, since otherwise $G$ is a cographs and 
we are done, by Theorem~\vref{thm 1}. 

\bigskip 

Consider an internal node $t$ of the decomposition tree. We refer to 
$V_t$ as the vertices that are mapped to leaves in the subtree rooted 
at $t$. The set of black and white vertices of $V_t$ are denoted as 
$B_t$ and $W_t$. 
similar as in the example, we store information of partial colorings 
of an internal node $t$ in a 
boolean function 
\[\tau(m_t,b_t,w_t,b_t^{\prime},w_t^{\prime}).\]  
Here 
\begin{enumerate}
\item $m_t$ is the number of mixed color classes, that is, 
color classes of $G[V_t]$ that contain at least one black 
and one white vertex; 
\item $b_t$ is the number of color classes that consist of only 
black vertices, and that will, presumably, `marry' (unite with 
white color classes) in the future; 
\item $w_t$ is the number of color classes that consist of only white 
vertices and that will, presumably, marry in the future; 
\item $b_t^{\prime}$ is the number of black color classes that will 
stay forever single, and 
\item $w_t^{\prime}$ is the number of white color classes that will 
stay forever single. 
\end{enumerate}

\bigskip 

We assume that there is an ordering of these color classes such that 
the subsets of the colors on $G[B_t]$ and $G[W_t]$ form a first-fit coloring. 
Consider the case where the first color class $C$ is of mixed type. 
Then $C \cap B_t$ and $C \cap W_t$ are maximal independent sets in 
$G[B_t]$ and $G[W_t]$. Notice that this implies that $C$ is a maximal 
independent set in $G$ (because every internal node of the decomposition 
tree is a parallel node or a crossing node). 

\bigskip 

We discuss the updating procedures. Consider a \underline{crossing node} $t$. 
Let 
\[\{\;m_{\ell},\;b_{\ell},\;w_{\ell},\;b_{\ell}^{\prime},\;w_{\ell}^{\prime}\;\} \quad\text{and}\quad 
\{\;m_r,\;b_r,\;w_r,\;b_r^{\prime},\;w_r^{\prime}\;\}\] 
be 
sets of parameters for the left and right child for which $\tau$ 
evaluates as {\tt true}. 
Then, if we denote the set parameters of $t$ by $m$, $b$, $w$, $b^{\prime}$ 
and $w^{\prime}$ we have 
\begin{enumerate}
\item $m=m_{\ell}+m_r$, 
\item $b=\max\;\{\;b_{\ell},\;b_r\;\}$, 
\item $w=\max \;\{\;w_{\ell},\;w_r\;\}$, 
\item $b^{\prime}=\max\;\{\;b_{\ell}^{\prime},\;b_r^{\prime}\;\}$, 
\item $w^{\prime}=\max\;\{\;w_{\ell}^{\prime},\;w_r^{\prime}\;\}$. 
\end{enumerate}

\bigskip 

For the \underline{parallel nodes}, the updates are similar as in the 
example of Section~\ref{section example}. 
Assume that $\alpha$ black color classes on the left marry to 
$\alpha$ white color classes on the right and that $\beta$ white color 
classes on the left marry with $\beta$ black color classes on the right. 
Then a necessary condition for the numbers $\alpha$ and $\beta$ is that 
\[\alpha \leq \min\;\{\;b_{\ell},\;w_r\;\} \quad\text{and}\quad 
\beta \leq \min\;\{\;w_{\ell},\;b_r\;\}.\] 
The new parameters become 
\begin{enumerate}
\item $m=m_{\ell}+m_r+\alpha+\beta$, 
\item $b=b_{\ell}+b_r-\alpha-\beta$, 
\item $w=w_{\ell}+w_r-\alpha-\beta$, 
\item $b^{\prime}=b_{\ell}^{\prime}+b_r^{\prime}$, and
\item $w^{\prime}=w_{\ell}^{\prime}+w_r^{\prime}$. 
\end{enumerate}

\bigskip 

The optimality condition requires that $b^{\prime}=0$ or that $w^{\prime}=0$ 
or that all these monochromatic color classes are from the same 
parallel component. That is, we must have (similar as in the example): 
\begin{enumerate}[\rm (a)]
\item $b^{\prime}=b_{\ell}^{\prime}$ and $g^{\prime}=g_{\ell}^{\prime}$ and 
$b_r^{\prime}=g_r^{\prime}=0$, or
\item $b^{\prime}=b_r^{\prime}$ and $g^{\prime}=g_r^{\prime}$ and 
$g_{\ell}^{\prime}=b_{\ell}^{\prime}=0$, or 
\item $b^{\prime}=b_{\ell}^{\prime}+b_r^{\prime}$ and 
$g^{\prime}=g_{\ell}^{\prime}=g_r^{\prime}=0$, or 
\item $g^{\prime}=g_{\ell}^{\prime}+g_r^{\prime}$ and 
$b^{\prime}=b_{\ell}^{\prime}=b_r^{\prime}=0$.
\end{enumerate}

\bigskip 

At the \underline{root}, we are only interested in the sets of parameters 
with 
\[b=w=0,\] 
since there is no opportunity for these 
color classes to marry in the future, and so, the Grundy 
number is, 
\[\Gamma(G)=\max\;\{\; m+b^{\prime}+w^{\prime} \;|\; 
\tau(m,0,0,b^{\prime},w^{\prime})=\text{\tt true}\;\}.\]

\bigskip 

\begin{theorem}
There exists a polynomial-time algorithm that computes the 
Grundy number of Cameron graphs. 
\end{theorem}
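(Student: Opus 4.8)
The plan is to show that the bottom-up dynamic program over the decomposition tree is both correct and polynomial-time. The first step is to pin down the invariant maintained by $\tau$. Because a Seidel switch alters only black-white adjacencies, at every node $t$ the sets $B_t$ and $W_t$ induce cographs $G[B_t]$ and $G[W_t]$, in which each crossing node acts as a union node and each parallel node as a join node. I would maintain that $\tau(m_t,b_t,w_t,b_t',w_t')=\text{\tt true}$ exactly when $G[V_t]$ admits a first-fit coloring whose classes can be ordered so that their restrictions to $B_t$ and to $W_t$ are first-fit colorings of $G[B_t]$ and $G[W_t]$, with $m_t$ mixed classes, $b_t$ and $w_t$ monochromatic classes reserved for a future marriage, and $b_t'$ and $w_t'$ monochromatic classes that are already maximal in $G$ and so remain single. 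The leaves initialize the table trivially (a single vertex is one reserved or one single class), and, granting the invariant at the root, the extraction $\Gamma(G)=\max\{m+b'+w'\mid\tau(m,0,0,b',w')=\text{\tt true}\}$ is immediate: a first-fit coloring of $G$ is one in which no reserved class survives, that is $b=w=0$, and its number of colors is $m+b'+w'$.

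The core of correctness is that the two update rules preserve the invariant, which I would verify by analyzing how an independent set of $G[V_t]$ restricts to the two children. At a crossing node opposite colors are adjacent across the cut and equal colors are not, so $G[B_t]$ and $G[W_t]$ are disjoint unions of the children's graphs; a class lies in one child or is a monochromatic class of one child united with a same-colored class of the other, and no mixed class can cross. Since first-fit colorings of a disjoint union synchronize class by class (the union recurrence behind Theorem~\ref{thm 1}), the number of classes of each monochromatic kind, reserved or single, is the larger of the two, which gives the four $\max$-updates, while $m=m_\ell+m_r$. At a parallel node equal colors are adjacent and opposite colors are not, so $G[B_t]$ and $G[W_t]$ are joins; a reserved black class of one child may be married to a reserved white class of the other, and choosing $\alpha$ and $\beta$ such marriages yields the additive updates under $\alpha\le\min\{b_\ell,w_r\}$ and $\beta\le\min\{w_\ell,b_r\}$. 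The fact that legitimizes treating the black and white colorings separately and recoupling them only through marriages is the one already recorded above: once $C\cap B_t$ and $C\cap W_t$ are maximal in $G[B_t]$ and $G[W_t]$, the class $C$ is automatically maximal in $G[V_t]$, because an uncolored black vertex is dominated within $G[B_t]$ and an uncolored white vertex within $G[W_t]$.

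The step I expect to be the main obstacle is justifying the principle of optimality, namely that it costs nothing to set $\tau=\text{\tt false}$ unless $b'=0$, or $w'=0$, or all single classes lie in one parallel component. Here the key observation is that any two permanently single classes of opposite colors must have an edge between them: if a black single class $S$ and a white single class $T$ were non-adjacent as sets, then in the coloring order the earlier of the two would fail to be maximal among the later vertices, since it could absorb a vertex of the other, contradicting first-fit. Since opposite-colored vertices are adjacent only when their lowest common ancestor is a crossing node, and never across a parallel node, the presence of such an edge places $S$ and $T$ within a common parallel component. Hence single classes of both colors from different parallel components cannot coexist, so the excluded parameter combinations are realized by no first-fit coloring and discarding them preserves both feasibility and the optimum; conversely, each retained state is witnessed by an actual first-fit coloring assembled top-down from the colorings of the children.

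Finally, for the running time, each of the five parameters is bounded by $|V_t|\le n$, so a node stores at most $(n+1)^5$ entries, and combining its two children scans all pairs of entries and, at a parallel node, all admissible $\alpha,\beta\le n$; this is polynomial per node and the decomposition tree has $O(n)$ nodes, so a crude bound such as $O(n^{13})$ on the total running time suffices and the theorem follows.
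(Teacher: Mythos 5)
Your proposal is correct and follows essentially the same route as the paper: the same dynamic program over the decomposition tree with states $(m,b,w,b',w')$, the same key observations (mixed classes can be placed first and are maximal in $G[V_t]$ once maximal in $G[B_t]$ and $G[W_t]$; monochromatic classes reduce to cograph colorings via the union/join recurrences; the optimality principle is forced because two non-adjacent single classes of opposite colors would violate first-fit maximality). The only addition is your explicit polynomial running-time bound, which the paper leaves implicit.
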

\begin{proof}
We prove first that there is a first-fit coloring with the computed 
set of parameters. 

\medskip 

\noindent 
Consider a crossing node $t$. Let parameters for the left and right 
subtree be 
\begin{equation}
\label{eqn1}
\{\;m_{\ell},\;b_{\ell},\;w_{\ell},\;b_{\ell}^{\prime},\;w_{\ell}^{\prime}\;\} 
\quad\text{and}\quad 
\{\;m_r,\;b_r,\;w_r,\;b_r^{\prime},\;w_r^{\prime}\;\}.
\end{equation}
We may assume that there are partial first-fit colorings for the 
graphs in the left and right subtree. The parameter setting for the node $t$ 
is 
\begin{enumerate}[\rm (1)]
\item $m=m_{\ell}+m_r$; 
\item $b=\max\;\{\;b_{\ell},\;b_r\;\}$; 
\item $w=\max\;\{\;w_{\ell},\;w_r\;\}$;
\item $b^{\prime}=\max\;\{\;b_{\ell}^{\prime},\;b_r^{\prime}\;\}$;
\item $w^{\prime}=\max\;\{\;w_{\ell}^{\prime},\;w_r^{\prime}\;\}$. 
\end{enumerate}
We claim that that there are colorings for the subgraphs on the left 
and right that start with the mixed color classes. We prove that below. 
Then the claim follows easily; any mixed color class that starts a 
coloring on the left or right is a maximal independent set in $G[V_t]$. 
This proves that there exists a coloring for $G[V_t]$ that starts with 
$m$ mixed color classes. 

\medskip 

\noindent 
Consider removing all mixed color classes. The remaining graphs on the left and 
right, induced by the black and white vertices are cographs. By Theorem~\ref{thm 1}, 
the number of monochromatic color classes is equal to the chromatic number of the 
respective cographs. For the node $t$, after removal of the mixed color classes, 
the formulas are the formulas that compute the chromatic numbers of the black 
and white cographs (which are united, since $t$ is a crossing node; so the formulas 
are given by~\eqref{eq1} on Page~\pageref{eq1}). 

\medskip 

\noindent 
Consider a parallel node $t$. Consider colorings for the left and right subgraph as 
in Equation~\eqref{eqn1}. 
By induction, there exist colorings for the left and right subgraph that 
start with the mixed color classes (if any). 
Consider removing the mixed color classes and let 
$V_t^{\prime}$ be the remaining set of vertices of $V_t$. 
Consider numbers $\alpha$ and 
$\beta$ with 
\[\alpha \leq \min\;\{\;b_{\ell},\;w_r\;\} \quad\text{and}\quad 
\beta \leq \min\;\{\;w_{\ell},\;b_r\;\}.\] 
Take the first $\alpha$ black color classes of the graph on the left 
and unite them with the first $\alpha$ white color classes on the right. 
This produces $\alpha$ new, mixed color classes. Notice that these can start 
a first-fit coloring in graph induced by $V_t^{\prime}$. 

\medskip 

\noindent 
It is readily checked that we may assume the optimality condition. 
There can be no unmarried black and white pair, which 
are in different parallel components. 

\medskip 

\noindent 
Consider a first-fit coloring of $G$. 
Consider a node $t$; by induction we may assume that the coloring 
induces partial colorings for the left and right subgraph. That is, 
the mixed color classes start the coloring on the left and right, and 
the monochromatic color classes form a coloring of the cographs,  
induced by the black and white vertices, with $\chi$ colors. If the node 
$t$ is crossing, the monochromatic color classes need to unite, since 
otherwise they are not maximal. In case $t$ is a parallel node, notice that 
the monochromatic color classes on the left and right need to be 
maximal, otherwise they do not form proper first-fit color classes. 

\medskip 

\noindent 
This proves the theorem. 
\qed\end{proof}

\newpage 

\section{Supplementary information on Grundy colorings}

A complete coloring of a graph is a coloring such that 
for every pair of colors, there exists an edge whose endpoints' colors  
match the colors of the pair. In other words, the union of no 
two color classes is an independent set. 
The maximal number of colors in a complete coloring is 
called the achromatic number and it is usually 
denoted as $\Psi(G)$.  This coloring owes its name, `complete' coloring, to   
the homomorphism $G \rightarrow K_k$, where $k=\Psi(G)$ is the maximal $k$ 
for which such a homomorphism exists. 
Computing the achromatic number is 
$\NP$-complete, even for trees and for trivially perfect graphs 
(which are the graphs without induced $P_4$ and $C_4$; so 
they include the cographs). 
Obviously, we have 
\[\chi(G) \leq \Gamma(G) \leq \Psi(G).\] 
Interestingly, the achromatic number is fixed-parameter 
tractable, that is, there exists a constant $c$ and a function 
$f: \mathbb{N} \rightarrow \mathbb{N}$ such that for each 
$k \in \mathbb{N}$, the question whether $\Psi(G) \geq k$ can 
be decided in $O(f(k)\cdot n^c)$ time. As far as we know, 
the question whether the Grundy number is fixed-parameter tractable 
is open. 

\bigskip 

Zaker showed that 
any graph with $\Gamma(G) \geq k$ has an induced 
subgraph with at most $2^{k-1}$ vertices 
and with Grundy number at least $k$.  This is called a $k$-witness. 
The existence of a $k$-witness implies that there is an algorithm 
that runs in $O(n^{2^{k-1}})$ time to decide if $\Gamma(G) \geq k$. 

\bigskip 

Zaker shows that computing $\Gamma(G)$ is $\NP$-complete for 
co-bipartite graphs. 
Obviously, 
$\Gamma(G) \leq \Delta(G)+1$. According to Havet and Sampaio, 
deciding whether $\Gamma(G) \leq \Delta(G)$ is $\NP$-complete for bipartite graphs. 
Deciding if $\Gamma(G) \geq \Delta(G)-k$ is fixed-parameter tractable 
with respect to the parameter $k$. 
Bonnet et al. show that the Grundy 
number is fixed-parameter tractable for chordal graphs, claw-free graphs and 
graphs excluding a fixed minor. 

\bigskip 

For cographs, $\Gamma(G)=\chi(G)$. Zaker showed that deciding whether 
$\Gamma(G)=\chi(G)$ is co-$\NP$-complete. 
Tang et al. mention the following conjecture by Zaker, which we repeat 
here because we think it is interesting (apparently, Zaker did not publish 
this conjecture). 

\begin{conjecture}
If $G$ is $C_4$-free then $\Gamma(G) \geq \delta(G)+1$, 
where $\delta(G)$ is the minimal degree of $G$. 
\end{conjecture}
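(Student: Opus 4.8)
The plan is to reduce the statement to an explicit greedy (first-fit) coloring in which a vertex of minimum degree is forced to receive a large color. Let $v$ be a vertex with $\deg(v)=\delta(G)=:d$ and write $N(v)=\{u_1,\dots,u_d\}$. I will use the greedy reformulation of the Grundy number implicit in the definition: $\Gamma(G)\ge k$ precisely when some ordering of $V(G)$ makes the first-fit procedure assign color $k$ to some vertex. Thus it suffices to produce an ordering ending at $v$ in which the neighbors $u_1,\dots,u_d$ receive the $d$ distinct colors $1,\dots,d$; the first-fit rule is then forced to give $v$ the color $d+1$, so $\Gamma(G)\ge d+1=\delta(G)+1$.

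First I would isolate the single structural consequence of $C_4$-freeness that drives everything. If $i\neq j$ then $u_i$ and $u_j$ already share the common neighbor $v$, so a \emph{second} common neighbor $x$ would close a four-cycle on $v,u_i,x,u_j$. Hence $N(u_i)\cap N(u_j)=\{v\}$, the sets $N(u_i)\setminus\{v\}$ are pairwise disjoint, and, crucially, no vertex of $N(u_i)\setminus\{v\}$ is adjacent to any $u_j$ with $j\neq i$ (such a vertex would again be a common neighbor). In other words, the material used to ``feed'' colors into $u_i$ is, at the first level, disjoint from and non-adjacent to the other neighbors of $v$, so it cannot disturb their colors. The case $d=2$ already exhibits the mechanism: a neighbor $w$ of $u_2$ other than $v$ is, by the above, non-adjacent to $u_1$, so coloring in an order such as $w,u_2,u_1,v$ yields the colors $1,2,1,3$.

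The main step is to turn this into an induction that boosts each $u_i$ up to color $i$. To give $u_i$ the color $i$ one must, before coloring $u_i$, have realized colors $1,\dots,i-1$ on neighbors of $u_i$; each such neighbor that is to carry a color $c$ must in turn be fed colors $1,\dots,c-1$, and so on. This builds a rooted ``Grundy witness'' tree hanging off each $u_i$. The degree budget is comfortable: $u_i$ has at least $d-1\ge i-1$ neighbors besides $v$, and a feeder that must attain a color $c\le i-1\le d-1$ has degree at least $d>c-1$, so locally there are always enough neighbors to recurse into. I would organize this as an induction on $d$ (equivalently on the target color), maintaining the invariant that the partial coloring realizing the witness for $u_i$ occupies a controlled set of vertices.

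The hard part—and the reason this remains a conjecture rather than a theorem—is controlling the \emph{global} interference between the witness trees of different neighbors. $C_4$-freeness forbids common neighbors, i.e.\ it governs interactions at distance two, which is exactly what makes the \emph{first} level of feeders disjoint and inert. But the recursion descends to depth greater than two, where $C_4$-freeness says nothing directly: the tree feeding $u_i$ and the tree feeding $u_j$ could share deeper vertices or have edges between them, creating color conflicts or consuming the neighbors that a feeder needs. Making the construction go through therefore amounts to embedding $d$ mutually non-interfering Grundy trees of increasing order simultaneously, rooted at $u_1,\dots,u_d$, and I expect the real work to lie in a counting or expansion argument—or a minimal-counterexample reduction that deletes a carefully chosen vertex while preserving the minimum degree—that supplies fresh, conflict-free feeders at every level using only the local scarcity of common neighbors guaranteed by the $C_4$-free hypothesis.
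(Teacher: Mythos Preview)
This statement is presented in the paper as an open \emph{conjecture} attributed to Zaker; the paper offers no proof and explicitly records it as unsettled. There is therefore nothing on the paper's side to compare your attempt against.

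As for the attempt itself, you have correctly identified both the natural approach and the reason it stalls. The first-level analysis is sound: in a $C_4$-free graph the neighbors $u_1,\dots,u_d$ of a minimum-degree vertex $v$ have pairwise disjoint neighborhoods outside $\{v\}$, and this is exactly what lets the $d=2$ construction go through. But your own final paragraph names the genuine gap: to force color $i$ onto $u_i$ you must recursively build a witness tree of depth $i-1$, and $C_4$-freeness only controls interactions at distance two from $v$. Deeper levels of the witness trees for different $u_i$ can overlap or be joined by edges, so the colors you plant for one branch can corrupt the first-fit colors in another. You have not supplied the counting, expansion, or minimal-counterexample argument that would resolve this interference, and indeed no such argument is known---that is precisely why the statement remains a conjecture. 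What you have written is a reasonable discussion of where the difficulty lies, not a proof.
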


\end{document}